\pdfoutput=1
\RequirePackage{ifpdf}
\ifpdf 
\documentclass[pdftex]{sigma}
\else
\documentclass{sigma}
\fi

\usepackage{slashed}

\newcommand{\ds}{{\slashed\partial}}
\newcommand{\A}{{\mathcal A}}
\newcommand{\Asm}{{\mathcal A_\text{SM}}}

\newcommand{\HH}{{\mathcal H}}
\newcommand{\M}{{\mathcal M}}

\newcommand{\C}{{\mathbb C}}
\newcommand{\HHH}{{\mathbb H}}
\newcommand{\cinf}{{C^\infty(\M)}}
\newcommand{\I}{\mathbb I}

\DeclareMathOperator{\alg}{\mathcal{A}}
\DeclareMathOperator{\hil}{\mathcal{H}}
\DeclareMathOperator{\dir}{\mathcal{D}}
\DeclareMathOperator{\st}{\left(\alg,\hil,\dir\right)}

\begin{document}
\allowdisplaybreaks

\renewcommand{\thefootnote}{}

\renewcommand{\PaperNumber}{109}

\FirstPageHeading

\ShortArticleName{Real Part of Twisted-by-Grading Spectral Triples}

\ArticleName{Real Part of Twisted-by-Grading Spectral Triples\footnote{This paper is a~contribution to the Special Issue on Noncommutative Manifolds and their Symmetries in honour of~Giovanni Landi. The full collection is available at \href{https://www.emis.de/journals/SIGMA/Landi.html}{https://www.emis.de/journals/SIGMA/Landi.html}}}

\Author{Manuele FILACI~$^\dag$ and Pierre MARTINETTI~$^\ddag$}

\AuthorNameForHeading{M.~Filaci and P.~Martinetti}

\Address{$^\dag$~Universit\`a di Genova -- Dipartimento di Fisica and INFN sezione di Genova, Italy}
\EmailD{\href{mailto:manuele.filaci@ge.infn.it}{manuele.filaci@ge.infn.it}}

\Address{$^\ddag$~Universit\`a di Genova -- Dipartimento di Matematica and INFN sezione di Genova, Italy}
\EmailD{\href{mailto:martinetti@dima.unige.it}{martinetti@dima.unige.it}}

\ArticleDates{Received September 03, 2020, in final form October 23, 2020; Published online October 29, 2020}

\Abstract{After a brief review on the applications of twisted spectral triples to physics, we adapt to the twisted case the notion of
\emph{real part} of a spectral triple. In particular, when one twists a usual spectral triple by its grading, we show that~-- depending on the $KO$ dimension~-- the real part is either twisted as well, or is the intersection of the initial algebra with its opposite. We illustrate this result with the spectral triple of the standard model.}

\Keywords{noncommutative geometry; twisted spectral triple; standard model}

\Classification{58B34; 46L87; 81T75}

\begin{flushright}
\emph{In honor of Giovanni Landi for its sixtieth birthday}
\end{flushright}

\renewcommand{\thefootnote}{\arabic{footnote}}
\setcounter{footnote}{0}

\section{Introduction}

Twisted spectral triples have been defined by Connes and Moscovici in~\cite{Connes:1938fk} in order to adapt the theory of spectral triples to type~$III$ algebras. Later, it turned out that twists also have
applications to models of high energy physics in noncommutative
geometry \cite{buckley}, paving the way to models beyond the standard model.

Most of the properties of twisted spectral triples relevant for
physics (in particular those regarding the real structure and gauge transformation) have been developed by Gianni Landi together with one of
the authors, in a couple of papers~\cite{Lett.,Landi:2017aa}. This note first presents a short
review of some of these results. Then we show how the notion of
\emph{real part} of a spectral triple (defined in~\cite{Chamseddine:2007oz})
easily adapts to the twisted case (Proposition~\ref{prop:realpart}).
We then focus on twisted spectral triples obtained by twisting
usual spectral triples by their grading. We investigate the behaviour
of the real part, stressing
the dependence on the $KO$-dimension. More precisely,
as shown in Proposition~\ref{prop:real-part2}, the real part of the twisted spectral triple
is either the twist of the real part of the initial triple, or the
intersection of the algebra with its opposite.
In the last section, we illustrate this result with the spectral triple of the standard model.

\section{Minimal twist by grading}\label{sec:minim-twist-stand}

This section is a (partial) review of the work of one of the authors with
Gianni Landi, regarding twisted spectral triples
\cite{Lett.,Landi:2017aa}. We begin with some considerations on the
use of spectral triples in physical models of fundamental
interactions, and how the discovery of the Higgs boson in 2012
has motivated some of us to use rather \emph{twisted} spectral triples, as defined earlier by Connes
and Moscovici~\cite{Connes:1938fk}. This led to the \emph{twist-by-grading} procedure,
that has been somehow ``touched'' in~\cite{buckley} and clearly formalised in~\cite{Lett.}.

\subsection{Motivation}
In noncommutative geometry \cite{Connes:1994kx}, the standard model of fundamental interactions is described by the
product (in the sense of spectral triples) of a $4$-dimensional closed
spin manifold $\M$ with an internal geometry that encodes the gauge
degrees of freedom. The latter is given by the finite-dimensional real algebra
\begin{equation*} 
 \A_\text{SM} =\C \oplus \HHH\oplus M_3(\C)
\end{equation*}
acting on the finite-dimensional Hilbert space $\HH_F= \C^{32n}$ ($n$
is
the number of generations of fermions),
together with a selfadjoint operator $D_F$ on $\HH_F$ that encodes the
Yukawa coupling of fermions. We refer the
reader to the literature for the details, in particular the raison d'\^etre of this algebra (see
\cite{Chamseddine:2007oz} for the original paper and, e.g.,~\cite{Chamseddine:2019aa} for a recent review).
In brief, the algebra is such that its group of unitary elements
($u\in\Asm$ such that $u^*u = u u^*=\I$)
yields back~-- modulo a unimodularity condition~-- the gauge group
${\rm U}(1)\times {\rm SU}(2) \times {\rm SU}(3)$ of the standard model.

The product of spectral triples is
 \begin{equation} \label{eq:12}
\A= \cinf\otimes \A_\text{SM},\qquad \HH= L^2(\M, S)\otimes \HH_F, \qquad D=
\ds \otimes \I + \gamma^5\otimes D_F,
 \end{equation}
 where $\cinf$ is the algebra of smooth functions on $\M$, acting by
 multiplication of the space $L^2(\M,S)$ of square integrable spinors
 with\footnote{As usual we use Einstein
 convention for the summation on indices repeated in alternate
 up\slash down positions.} $\ds= -{\rm i}\gamma^\mu\nabla_\mu$ the Dirac operator associated
 with the spin structure of $\M$.

Fermions are elements of $\HH$, bosons are connection $1$-forms which,
in noncommutative geometry, are of the form
\begin{equation} \label{eq:13}
A= a^i [D, b_i], \qquad a^i, b_i\in \A.
\end{equation}
For the spectral triple
\begin{equation}
\big(\cinf, L^2(\M, S), \ds\big)\label{eq:16}
\end{equation}
of a spin manifold, equation~\eqref{eq:13} gives back the usual $1$-forms. For the product of geometry~\eqref{eq:12}, it gives the gauge bosons of the standard model together
with the Higgs field. In noncommutative geometry, the latter is thus obtained on the same footing
as the other gauge bosons, that is as a connection $1$-form.

 Furthermore, the
mass of the Higgs boson is not a free parameter and can be computed as
a function of the parameters of the model (that is, the entries of
the matrix~$D_F$). From the beginning of the model in the 90's, the
prediction has always been around $m_H\simeq 170~\text{GeV}$. After the
discovery of the Higgs boson with $m_H\simeq 125~\text{GeV}$, several ways
have been explored to accomodate the correct mass (e.g., \cite{Boyle:2019ab,T.-Brzezinski:2016aa,Brzezinski:2018aa,Chamseddine:2013uq, Chamseddine:2013fk}, see \cite{Chamseddine:2019aa} for a recent review).

Most of them start from the following observation: in the spectral triple of the standard model, there is a part of the
operator~$D_F$~-- the one that contains the
Majorana mass of the neutrinos~-- that commutes with the
algebra and, as such, does not contribute to the bosonic content of the
model via~\eqref{eq:13}. However, it turns out that by
turning this Majorana mass (which is a constant~$k_R$) into a field,
say~$\sigma$, then one
obtains precisely the kind of scalar field proposed in particle physics to stabilize the electroweak vacuum. In addition, by altering the running of the renormalisation
group, this extra scalar field makes the computation of the mass of
the Higgs boson compatible with
its experimental value~\cite{Chamseddine:2012fk}.

The point is then to understand how to turn
the constant $k_R$ into a field $\sigma$ within the framework of
noncommutative geometry.
 Various scenarios have been proposed, one of them consists in viewing the total Hilbert space $\HH$ of the product~\eqref{eq:12} as a $(32n\times 4)$-dimensional space~\cite{Devastato:2013fk}, allowing the algebra to act non trivially on
the spinorial degrees of freedom. By doing so, still following the classification of all possible
internal algebras in~\cite{Chamseddine:2007fk, Chamseddine:2008uq}, one is able to
consider an internal algebra bigger than
$\A_\text{SM}$~-- called \emph{grand algebra} in~\cite{Devastato:2013fk}~-- that no longer commutes with the part of~$D_F$
that contains~$k_R$, making the latter contribute to the bosonic
content of the theory.

However, as a side effect, one gets that the commutator of
$\ds\otimes\I$ with this grand algebra is no longer bounded. This is
in contradiction with one of the basic requirement of spectral triples
(namely that $[D, a]$ should be bounded for any $a\in\A$). This kind
of problem has already been encountered when one deals with conformal maps on the canonical
spectral triple~\eqref{eq:16} of a manifold. A solution, as explained by Connes and
Moscovici in \cite{Connes:1938fk}, consists in twisting the spectral
triple. Quite remarkably (given that twists were originally motivated
by purely mathematical reasons), this solution also
works for the
standard model.

\subsection{Twisted spectral triple}\label{subsec:twist-spectr-triple}

 A twisted spectral triple \cite{Connes:1938fk} is given by an involutive algebra $\A$ acting on a Hilbert space $\HH$, a selfadjoint operator $\dir$ on $\HH$ with compact resolvent, together with an automorphism $\rho$ of $\A$ such that the \emph{twisted commutator}
 \begin{equation*}
 [\dir, a]_\rho :=\dir a - \rho(a)\dir
 \end{equation*}
is bounded for any $a\in\A$.

A \emph{grading} is a selfadjoint
operator $\Gamma$ on $\HH$ that squares to $\I$, commutes with the
algebra and anti-commutes with $D$. A \emph{real structure} is an anti-linear isometry $J$ (that is
 $J^*J= \I$) on $\mathcal H$, that squares to $+\I$ or $-\I$ and commutes or
 anticommutes with the Dirac operator and the grading:
 \begin{equation*}
JD = \epsilon' DJ, \qquad J\Gamma = \epsilon''\Gamma J.
 \end{equation*}
The choice of these various signs determines the $KO$-dimension of the
spectral triple. Notice that since $J^2=\pm
 \I$, one has that $J$ is surjective hence unitary: $J^* =J^{-1}$.

As in the non-twisted case, the real structure is asked to implement a representation of the opposite algebra $\A^\circ$, identifying $a^\circ$ with $Ja^*J^{-1}$. This action commutes with the one of $\A$, yielding the \emph{order zero condition}
\begin{equation} \label{eq:01}
[a, b^\circ] =0 \qquad \forall\, a, b\in\alg .
\end{equation}
This condition is the same as in the non-twisted case. However, it was
shown in~\cite{Lett.} that another important condition, the
\emph{first order condition}, needs to be modified in the twisted
context, yielding the following \emph{twisted firt-order condition} (originally introduced in \cite{buckley}):
\begin{equation} \label{eq:1}
 [[\dir, a]_\rho,b^\circ]_{\rho_0} =0 \qquad \forall\, a, b\in\alg,
\end{equation}
where $\rho^\circ$ is the automorphism of $\A^\circ$ defined by
\begin{equation} \label{eq:4}
 \rho^\circ(a^\circ):= \big(\rho^{-1}(a)\big)^\circ.
\end{equation}
One uses $\rho^{-1}$ instead of $\rho$ because the twisting automorphism in~\cite{Connes:1938fk} is not asked
to be a~$*$~auto\-morphism but rather to satisfy
\begin{equation*} 
\rho(a^*)=\big(\rho^{-1}(a)\big)^*.
\end{equation*}

\subsection{Twist by grading}\label{sec:twist-grading}

Given a real spectral triple $(\A, \HH, D)$, requiring that there
exists a non-trivial
automorphism $\rho$ such that $[D, a]_\rho$ is bounded for any $a$
puts severe constraints on the algebra (actually $a - \rho(a)$ must be bounded
for any~$a$ \cite[Lemma~3.1]{Lett.}). So in order to introduce a twist one
needs to modify some of the elements of the triple. Since $\HH$ and
$D$ encode the fermionic sector of the standard model, and the point
is to generate a new scalar field (no new fermions), it makes sense to look for a
minimal modification of the spectral triple, letting the Hilbert space
and the Dirac operator untouched. Playing only with the representation
does not allow much freedom, so one should be allowed to enlarge the
algebra, in agreement with the grand algebra idea mentioned in
the introduction. We call this procedure a
\emph{minimal twist}.

As a matter of fact, to twist the standard model, one
considers as a grand algebra twice the algebra
\eqref{eq:12}, that is \begin{equation}
\A\otimes \mathbb R^2,\label{eq:14}
\end{equation}
with each of the copies of $\A$ acting independently on the $+1$, $-1$
eigenspaces of the grading operator $\Gamma$. Namely the
representation $\pi$ of~\eqref{eq:14} on~$\HH$ is
\begin{equation} \label{eq:10}
\pi(a, a') = \frac 12(\I + \Gamma) a + \frac 12(\I - \Gamma) a'\qquad \forall\, a, a'\in \A,
\end{equation}
where $\I$ is the identity operator on $\HH$.
The twisting automorphism is simply the exchange of the two components
of \eqref{eq:14}:
\begin{equation}
 \label{eq:11}
\rho(a, a') = (a', a).
\end{equation}

This construction is generic: given any real graded spectral triple
$(\A, \HH, D)$ (with $\A$ a complex algebra to fix the notations), we call its \emph{twist by grading} the twisted
spectral triple
\begin{equation*}
\big(\A\otimes\C^2, \HH, D\big)_\rho
\end{equation*}
with representation \eqref{eq:10} and twist \eqref{eq:11}. This is a real graded twisted spectral
triple, with the same grading $\Gamma$ and real structure $J$ (hence
same $KO$-dimension) as the initial spectral triple \cite[Proposition~3.8]{Lett.}.

The flip \eqref{eq:11} coincides with the inner automorphism of
$\mathcal B(\mathcal H)$ induced
by the unitary operator
\begin{gather*} 
{\mathcal R} =
\begin{pmatrix}
 0 & \I_+ \\ \I_- & 0
\end{pmatrix},
\end{gather*}
where $\I_{\pm}$ are the identity operators on the subspace
$\HH_\pm$ of the grading $\Gamma$. To be able to define the fermionic action in a twisted context, we
required in~\cite{Devastato:2018aa} a compatibility condition between
the automorphism $\rho$ and the real structure, namely one asks that
\begin{equation} \label{eq:23}
J {\mathcal R} = \epsilon''' {\mathcal R} J \qquad \text{for} \quad\epsilon'''=\pm 1.
\end{equation}
Using the same notation $\rho$ to denote the extension of the flip to
the whole of $\mathcal B(\mathcal H)$,
\begin{equation*}
\rho(\mathcal O) := \mathcal R \mathcal O \mathcal R^\dag\qquad \forall\, \mathcal O \in \mathcal B(\mathcal H),
\end{equation*}
the compatibility condition \eqref{eq:23} amounts to
\begin{equation} \label{eq:19}
\rho(a^\circ) = \left( \rho(a)\right)^\circ.
\end{equation}

\section{Real part}\label{sec:gauge-group}

We now come to the original content of this paper, which is to compute
the real part of a~twisted-by-grading spectral triple
\begin{equation}
\big(\A\otimes\C^2, \HH, D\big)_\rho.\label{eq:33}
\end{equation}
Being graded,
such a spectral triple necessarily has an even $KO$-dimension. We show
below that in $KO$-dimension $0$, $4$ the real part of~\eqref{eq:33} is the twist by
grading of the real part of the initial spectral triple $(\A, \HH,
D)$, whereas in $KO$-dimension $2$, $6$, the real part is the
intersection $\A\cap \A^\circ$ of the algebra with its opposite.

\subsection{Real part of a twisted spectral triple}\label{subsec:real-part}

The \emph{real part} of a real spectral triple~$(\A, \HH,D)$~-- as
defined in~\cite{Chamseddine:2007oz}~-- is the spectral
triple $(\A_J, \HH, D)$ where $\A_J$ is the subalgebra of~$\A$
generated by the elements that commute with the real structure.
This definition easily generalizes to the twisted case, thanks to
following proposition which is a twisted version of
\cite[Proposition~1, p.~125]{Connes:2008kx}.
\begin{proposition}\label{prop:realpart}
Let $\st_\rho$ be a real twisted spectral triple with real structure~$J$. Then, the following holds:
\begin{enumerate}\itemsep=0pt
\item[$1.$] The equality
\begin{equation*}
\alg_J= \{a\in\alg\,|\,aJ=Ja \}
\end{equation*}
defines an involutive commutative real subalgebra of the center of~$\alg$.
\item[$2.$] If the twisting automorphism $\rho$ is induced by a unitary
 operator compatible with the real
 structure in the sense of~\eqref{eq:23}, then $ (\alg_J,\hil,\dir )_\rho$ is a real twisted spectral triple.
\item[$3.$] Any $a\in\alg_J$ twist-commutes with the algebra generated by
 the sums $ a^i [\dir, b_i ]_\rho$ for $a^i$, $b_i$ in~$\alg$.
\end{enumerate}
\end{proposition}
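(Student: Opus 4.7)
The plan is to prove the three items in order, since Parts~2 and~3 will both rely on facts established in Part~1.

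For Part~1, I would first check closure of $\alg_J$ under sums, products, multiplication by \emph{real} scalars, and the $*$ operation. These are all direct from $aJ=Ja$ together with the antilinearity of $J$ (multiplication by a complex scalar $\lambda$ on the left fails because $J(\lambda a)=\bar\lambda Ja$, forcing the coefficients to be real); closure under $*$ uses $J^*=J^{-1}$. The heart of the argument is the identity $a^\circ=Ja^*J^{-1}=a^*$ for $a\in\alg_J$, which follows immediately from $a^*J=Ja^*$. Inserting this into the order zero condition~\eqref{eq:01} and pushing $a$ through $J$, $J^{-1}$ via $aJ^{\pm 1}=J^{\pm 1}a$ produces $ab^*=b^*a$ for every $b\in\alg$. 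Since $b^*$ ranges over all of~$\alg$, this places $\alg_J$ inside the center of~$\alg$; commutativity of $\alg_J$ is then automatic.

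For Part~2, the only genuinely new thing to verify is that $\rho$ restricts to an automorphism of $\alg_J$, since boundedness of $[\dir,a]_\rho$, the order zero and twisted first-order conditions, and the real structure properties of $J$ are all inherited from $\st_\rho$ by restriction. Writing $\rho(a)=\mathcal R a\mathcal R^{-1}$ for $a\in\alg_J$, the chain
\begin{equation*}
\rho(a)\,J=\mathcal R a\mathcal R^{-1}J=\epsilon'''\mathcal R aJ\mathcal R^{-1}=\epsilon'''\mathcal R Ja\mathcal R^{-1}=(\epsilon''')^2J\mathcal R a\mathcal R^{-1}=J\rho(a)
\end{equation*}
delivers $\rho(a)\in\alg_J$: the second and fourth equalities use~\eqref{eq:23} (in the equivalent forms $\mathcal R^{-1}J=\epsilon'''J\mathcal R^{-1}$ and $\mathcal R J=\epsilon'''J\mathcal R$), the third uses $aJ=Ja$, and $(\epsilon''')^2=1$ closes the loop.

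For Part~3 I would exploit the twisted first-order condition~\eqref{eq:1}. Given $c\in\alg_J$, take $b=c^*$ (also in $\alg_J$ by Part~1), so that $b^\circ=JcJ^{-1}=c$ thanks to $cJ=Jc$. To simplify $\rho^\circ(c)$, I use the identity $\rho^{-1}(x^*)=\rho(x)^*$, a direct consequence of the twisting condition on $\rho$: combined with~\eqref{eq:4} this yields $\rho^\circ(c)=J\rho(c)J^{-1}$, and since $\rho(c)\in\alg_J$ by Part~2, the right-hand side equals $\rho(c)$. Hence~\eqref{eq:1} specializes, for every $a\in\alg$, to $[\dir,a]_\rho\,c=\rho(c)\,[\dir,a]_\rho$. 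Multiplying on the left by $a^i\in\alg$ and invoking the centrality of $\rho(c)$ from Part~1, this gives $Ac=\rho(c)A$ for any sum $A=a^i[\dir,b_i]_\rho$, which is the desired twist-commutation; iteration covers the whole generated algebra, with higher powers of $\rho$ accumulating on products.

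The step I anticipate as the main obstacle is the bookkeeping in Part~3: identifying $\rho^\circ(c)$ with $\rho(c)$ simultaneously uses Part~1 (centrality and $a^\circ=a^*$ on $\alg_J$), Part~2 (the $\rho$-invariance of $\alg_J$ guaranteed by~\eqref{eq:23}) and the twisting identity $\rho^{-1}(x^*)=\rho(x)^*$. If the compatibility~\eqref{eq:23} were dropped, Part~2 would collapse and this clean cancellation in Part~3 would fail.
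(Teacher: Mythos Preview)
Your Parts~1 and~2 are essentially the paper's own argument; in Part~1 you conjugate $b^\circ$ by $J^{\pm1}$ rather than substituting $a^\circ=a^*$ directly into $[b,a^\circ]=0$, but this is cosmetic.

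The genuine difference is Part~3. The paper expands the full twisted commutator via a Leibniz-type identity,
\[
\big[a^i[\dir,b_i]_\rho,\,a^\circ\big]_{\rho^\circ}
= a^i\big[[\dir,b_i]_\rho,a^\circ\big]_{\rho^\circ}
+\big[a^i,\rho^\circ(a^\circ)\big][\dir,b_i]_\rho,
\]
kills the first summand with the twisted first-order condition and the second with the order-zero condition (since $\rho^\circ(a^\circ)=(\rho^{-1}(a))^\circ\in\alg^\circ$), and only at the end substitutes $a^\circ=a^*$. Your route instead applies the first-order condition to a single factor $[\dir,b_i]_\rho$, identifies $\rho^\circ(c)$ with $\rho(c)$ through Part~2, and then uses centrality of $\rho(c)$ to commute past the prefactor~$a^i$. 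Both arguments are correct, but there is a trade-off in generality: the paper's Leibniz decomposition establishes Part~3 \emph{without} the compatibility hypothesis~\eqref{eq:23} (its parenthetical remark about the flip being involutive is actually unnecessary, since order zero only needs $\rho^\circ(a^\circ)\in\alg^\circ$), whereas your argument genuinely relies on $\rho(\alg_J)\subset\alg_J$ from Part~2, exactly as you anticipate in your closing paragraph. So your proof of Part~3, while sound, covers only the case where the compatibility of Part~2 is assumed.
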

\begin{proof}The proof is a straightforward adaptation of the non-twisted case.

1. By construction $\alg_J$ is a real subalgebra of~$\alg$ (but not
 a complex one, being $J$ antilinear). Since~$J$ is unitary, and
 remembering that the
 usual rule of adjoint for a product of operators also holds in the
 antilinear case as soon as the product involves an even number of
 antilinear operators, one has
\begin{equation*} 
 \big(JaJ^{-1}\big)^* = Ja^*J^{-1}\qquad\forall\, \in\alg.
\end{equation*}
Since for $a\in \alg_J$ one has $JaJ^{-1} = a$, one gets
\begin{equation}
Ja^*J^{-1} = a^*,\label{eq:7}
\end{equation}
meaning $a^* \in \alg_J$, that is $\alg_J$ is a involutive
algebra. To show that it is
contained in the center of~$\alg$, notice that~\eqref{eq:7} implies
\begin{equation}
a^\circ = a^* \qquad\forall\, a \in\alg_J,\label{eq:8}
\end{equation}
so that the order zero condition~(\ref{eq:01}) yields $[b, a^*] = 0$ for any $b\in \alg$.

2. Being $\alg_J$ a~subalgebra of $\alg$, the only point one needs to check is the
 stability of $\mathcal A_J$ under the automorphism $\rho$, that is
 \begin{equation*}
\rho(a)\in \A_J \qquad \forall\, a\in \A_J.
 \end{equation*}
This is guaranteed by the compatibility~\eqref{eq:23} of the real structure with
$\rho$: for $a\in \A_J$ one has
\begin{equation*}
 J\rho(a) J^{-1}= J\mathcal R a \mathcal R^\dag J^{-1}=
 (\epsilon''')^2 \mathcal R \big(J a J^{-1}\big) \mathcal R^\dagger =
 \mathcal R a \mathcal R^\dagger = \rho(a).
\end{equation*}
The definition of the twisted commutator yields
\begin{equation*}
\big[a^i[\dir,b_i]_\rho,a^\circ\big]_{\rho^\circ}=a^i\big[ [\dir,b_i ]_\rho,a^\circ\big]_{\rho^\circ}+\big[a^i,\rho^\circ (a^\circ )\big] [\dir,b_i ]_\rho.
\end{equation*}
The first term on the right hand side is zero by the twisted first order condition~(\ref{eq:1}), the second
vanishes because of the order zero condition~\eqref{eq:01}, remembering that
$\rho^\circ (a^\circ )= (\rho (a ) )^\circ$
(this follows from \eqref{eq:4} because the flip automorphism is its
own inverse). By~\eqref{eq:8} one then has that $a_i [\dir,b_i ]_\rho$ twist-commutes
with any~$a^*$, hence the result.
\end{proof}

This definitions makes sense also when $\A$ is a real
algebra, as for the standard model.

Notice that the compatibility condition~\eqref{eq:23} between the real structure and
the twisting automorphism is important to guarantee the stability of~$\A_J$ under~$\rho$. Whether this condition is necessary in order to be able to define the real part of a~twisted spectral triple should be
further investigated~\cite{Manuel-Filaci:2020aa}.

\subsection{Real part of a twisted-by-grading}

How the real part behaves under the twist-by-grading heavily depends
on the $KO$-dimension. In
$KO$-dimension~$0$,~$4$, the grading commutes with the real structure and
we show below that the real part of the twist-by-grading is the twist-by-grading of
the real part.
In $KO$-dimension~$2$,~$6$, the grading anticommutes with the real
structure (as for the standard model which has $KO$-dimension~$2$) and
the
 real part is the
intersection of the algebra with its opposite.

\begin{proposition}\label{prop:real-part2} Let $(\A, \HH, D)$ be a real graded spectral
triple with real part $\A_J$. If its
twist-by-grading $\big(\A\otimes \C^2, \HH, D\big)_\rho$ is
compatible with the real structure, then its real part is either
$\A_J\otimes\mathbb C^2$ in $KO$-dimension~$0$,~$4$, or it is~$\A\cap \A^\circ$ in
$KO$-dimension~$2$,~$6$.
\end{proposition}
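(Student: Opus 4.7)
My plan is to apply Proposition~\ref{prop:realpart} to the twist-by-grading $\big(\A\otimes\C^2,\HH,D\big)_\rho$ and to unravel the condition $\pi(a,a')J=J\pi(a,a')$ directly from the representation~\eqref{eq:10}. Setting $P_\pm:=\frac{1}{2}(\I\pm\Gamma)$, the intertwining relation $J\Gamma=\epsilon''\Gamma J$ yields $JP_\pm=P_{\pm\epsilon''}J$, so that the defining condition of the real part becomes
\[
P_+\,aJ+P_-\,a'J=P_{\epsilon''}\,Ja+P_{-\epsilon''}\,Ja'.
\]
The orthogonality of $P_+$ and $P_-$ splits this into two independent equations whose shape depends sharply on the sign of $\epsilon''$.

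When $\epsilon''=+1$ (KO-dimension $0$ or $4$), $J$ preserves each projector and the equation decouples into $[a,J]=0$ on $\HH_+$ and $[a',J]=0$ on $\HH_-$. Using that $a$, $a'$ and $J$ all commute with $\Gamma$, these block-wise commutators identify $(a,a')$ as an element of $\A_J\otimes\C^2$. The reverse inclusion is immediate by running the same computation backwards, so the real part is $\A_J\otimes\C^2$.

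When $\epsilon''=-1$ (KO-dimension $2$ or $6$), $J$ exchanges the two projectors and the equation couples the two components: it becomes $aJ=Ja'$ on $\HH_-$ together with $a'J=Ja$ on $\HH_+$, which amounts to the single relation $a'=JaJ^{-1}$. Since $JaJ^{-1}$ always belongs to $\A^\circ$, the requirement that it coincide with an element $a'\in\A$ selects precisely those $a\in\A\cap\A^\circ$; because the involution $\phi(x):=JxJ^{-1}$ satisfies $\phi^2=\Id$ (thanks to $J^2=\pm\I$), one simultaneously gets $a'\in\A\cap\A^\circ$, and the map $a\mapsto(a,\phi(a))$ identifies the real part with $\A\cap\A^\circ$.

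The key technical point, and main obstacle, is the $\epsilon''=-1$ case: one has to recognise that the constraint $a'=JaJ^{-1}$ is exactly what cuts out the intersection $\A\cap\A^\circ$, since $JaJ^{-1}$ is a priori only guaranteed to sit in $\A^\circ$. The $\epsilon''=+1$ case is then a straightforward verification once one has separated the computation along the grading of $\HH$.
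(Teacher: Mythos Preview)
Your projector computation is exactly the right setup and matches what the paper does, but there is a genuine gap in the forward direction in \emph{both} $KO$-dimension cases: the equations you extract only constrain the pair $(a,a')$ on \emph{one} eigenspace of $\Gamma$, and you never use the hypothesis that the twist is compatible with $J$ to obtain the missing half.

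Concretely, for $\epsilon''=+1$ your decoupling gives $P_+[a,J]=0$ and $P_-[a',J]=0$, i.e., $[a,J]$ vanishes on $\HH_+$ and $[a',J]$ on $\HH_-$. The sentence ``using that $a$, $a'$ and $J$ all commute with $\Gamma$, these block-wise commutators identify $(a,a')$ as an element of $\A_J\otimes\C^2$'' does not follow: an operator that commutes with $\Gamma$ and vanishes on $\HH_+$ need not vanish on $\HH_-$, so $[a,J]|_{\HH_-}$ is uncontrolled. The paper closes this gap by invoking the compatibility~\eqref{eq:23}: since $(\A\otimes\C^2)_J$ is $\rho$-stable (Proposition~\ref{prop:realpart}(2)), the flipped pair $(a',a)$ also lies in the real part, yielding $P_+[a',J]=0$ and $P_-[a,J]=0$, and summing gives $[a,J]=[a',J]=0$.

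The same issue recurs for $\epsilon''=-1$. Your two equations $aJ=Ja'$ on $\HH_-$ and $a'J=Ja$ on $\HH_+$ are \emph{equivalent}: each says $a'|_{\HH_-}=JaJ^{-1}|_{\HH_-}$ (equivalently $a|_{\HH_+}=Ja'J^{-1}|_{\HH_+}$). They say nothing about $a'|_{\HH_+}$ or $a|_{\HH_-}$, so ``which amounts to the single relation $a'=JaJ^{-1}$'' is not justified, and hence neither is $a\in\A\cap\A^\circ$ (knowing that $JaJ^{-1}\in\A^\circ$ agrees with some $a'\in\A$ on $\HH_-$ does not force $JaJ^{-1}\in\A$). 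Again the fix is to apply the same projector analysis to $(a',a)\in(\A\otimes\C^2)_J$, which yields $a'|_{\HH_+}=JaJ^{-1}|_{\HH_+}$, and then $a'=JaJ^{-1}$ globally. Once you insert this use of the compatibility hypothesis, your argument coincides with the paper's.
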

\begin{proof}
In $KO$-dimension $0$, $4$ the grading and the
real structure commute (i.e., $\epsilon''=1$) so that
 \begin{gather} \label{eq:17}
[ J,\pi(a, a')] =\frac 12 [J,(\I + \Gamma)a] +\frac 12 [J,(\I -
\Gamma)a']
=\frac 12 [J, a+a'] + \frac 12 [J,\Gamma( a-a')]
 \end{gather}
reduces to
\begin{equation*}
 \frac 12 [J, a+a'] + \frac 12 \Gamma[J,a-a'].
\end{equation*}
If $a$, $a'$ are in $\A_J$, this is zero, showing that
\begin{equation}\label{eq:36}
\A_J\otimes\mathbb R^2 \subset \big(\A\otimes\C^2\big)_J.
\end{equation}

To show the
opposite inclusion, assume that $(a, a')\in \big(\A\otimes\C^2\big)_J$. Hence
\eqref{eq:17}, rewritten as
\begin{equation*}
 \frac 12 (\I+\Gamma) [J,a]+ \frac 12 (\I-\Gamma)[J, a']
\end{equation*}
 is zero. Multiplying by $\I + \Gamma$ and $\I-\Gamma$ one obtains
\begin{equation}
 \label{eq:18}
 (\I + \Gamma)[J,a]=0,\qquad (\I - \Gamma)[J,a']=0.
\end{equation}
By compatibility of the real structure with the twist,
$(a',a)$ also belongs to $\big(\A\otimes\C^2\big)_J$ so that
\begin{equation} \label{eq:27}
 (\I + \Gamma)[J,a']=0,\qquad (\I - \Gamma)[J,a]=0.
\end{equation}
Combining \eqref{eq:18}, \eqref{eq:27}, one gets
$[J,a]= [J,a']=0$, that is $a$ and $a'$ are in $\A_J$. In other terms,
 $\big(\A\otimes\C^2\big)_J\subset \A_J\otimes\mathbb R^2$. Together with~\eqref{eq:36}, this shows the first statement of the proposition.

In $KO$-dimension $2$, $6$, grading and real structure anti-commute ($\epsilon''=-1$) so that
\begin{align}
 [J, \pi(a,a')]
 & =\frac12 (J (\I+\Gamma)a - (\I+\Gamma)aJ + J(\I-\Gamma)a'-(\I-\Gamma)a'J ) \nonumber\\
&=\frac12 ( (\I-\Gamma)Ja - (\I+\Gamma)aJ + (\I+\Gamma)Ja'-(\I-\Gamma)a'J ) \nonumber\\
\label{eq:40}
&=\frac12 ( (\I-\Gamma)(Ja -a'J) + (\I+\Gamma)(Ja' - aJ) ).
\end{align}
For $(a,
a')$ in $\big(\A\otimes\C^2\big)_J$, this is zero. The same is true for $(a',
a)$ by the invariance of $\big(\A\otimes\C^2\big)_J$ by the
twist. Multiplying by
$\I\pm\Gamma$ one thus obtains
\begin{gather*}
 (\I-\Gamma)(Ja -a'J)=0,\qquad
(\I+\Gamma)(Ja' - aJ)=0,\\
 (\I-\Gamma)(Ja' -aJ)=0,\qquad
(\I+\Gamma)(Ja - a'J)=0.
\end{gather*}
Combining these two sets of equations yields $Ja'=aJ$ and $Ja=a'J$, so that
\begin{equation}\label{eq:37}
 a'=JaJ^{-1} =(a^*)^\circ.
\end{equation}
Therefore $a=Ja'J^{-1}$ is in $\A^\circ$, but also in $\A$ by
hypothesis. In other terms, any element of $\big(\A\otimes\C^2\big)_J$ is of the type $(a,
(a^*)^\circ)$ with $a\in\A\cap \A^\circ$.

Conversely, by definition of the opposite algebra any $a\in \A\cap
\A^\circ$ is equal to $b^\circ=J b^* J^{-1}$ for some $b\in \A$. Then $(a^*)^\circ=JaJ^{-1}=b^* $ is in
$\A$, so that $(a, (a^*)^\circ)$ is in $\A\otimes\C^2$. Inserting in~\eqref{eq:40} one gets $[J, \pi(a, (a^*)^\circ)]=0$, meaning that $(a,(a^*)^\circ)\in\big(\A\otimes\C^2\big)_J$.

Therefore $\big(\A\otimes\C^2\big)_J\simeq \A\cap\A^\circ$, hence the second
statement of the proposition. \end{proof}

Obviously $\A_J$ is in $\A\cap\A^\circ$, so the real part of the
initial triple is contained in the real part of the twist-by-grading,
whatever the $KO$-dimension. However $\big(\A\otimes\C^2\big)_J$ may equal
$\A_J$ only in $KO$-dimension $2$,~$6$, and only if the intersection
$\A\cap\A^\circ$ reduces to $\A_J$. In that case, $a'$~in~\eqref{eq:37} coincides with~$a$ and the flip $\rho$ is the identity
automorphism. The spectral triple defined in the second point of
Proposition~\ref{prop:realpart} is then a usual (non-twisted) spectral triple. This actually happens with the
standard model as discussed in the next section.

\subsection{The twisted standard model and its real part}\label{sec:twist-stand-model}

We now compute the real part of the twist-by-grading of the Standard
Model, working with one generation of fermions only, that is
$n=1$. Following~\cite{Chamseddine:2007oz}, the
 $32$ degrees of freedom of the finite-dimensional Hilbert
space $\HH_F$ are labelled by a multi-index $CI\alpha$
where
 \begin{itemize}\itemsep=0pt
 \item[$\bullet$] $C=0,1$ is for particle $(C=0)$ or anti-particle
 $(C=1)$;
 \item[$\bullet$] $I=0, i$ with $i=1,2,3$ is the lepto-colour index: $I=0$
 means lepton, while $I=1, 2, 3$ are for the quarks, which exists in
 three colors;
 \item[$\bullet$] $\alpha=\dot{a}, a$ with $a=1,2$ is the flavour index:
 \begin{gather*}
 \dot{1}=\nu_R, \qquad \dot{2}=e_R,\qquad 1=\nu_L,\qquad 2=e_L\qquad \text{for leptons ($I=0$)},\\
 \dot{1}=u_R, \qquad \dot{2}=d_R,\qquad 1=q_L,\qquad 2=d_L \qquad \text{for quarks ($I=i)$}.
 \end{gather*}
\end{itemize}
To deal with the twist, it is convenient to label the degrees of freedom of
$L^2(\M, S)$ by two extra-indices $s\dot s$ where
 \begin{itemize}\itemsep=0pt\samepage
 \item[$\bullet$] $s=r,l$ is the chirality index;
 \item[$\bullet$] $\dot{s}=\dot{0},\dot{1}$ denotes particle ($\dot{0}$) or
 anti-particle part ($\dot{1}$).
 \end{itemize}

The grading is
\begin{equation*} 
\Gamma = \gamma^5\otimes \gamma_F,
\end{equation*}
where $\gamma^5=
\left(\begin{smallmatrix}
 \delta_{\dot s}^{\dot t} & 0\\ 0 & -\delta_{\dot s}^{\dot t}
\end{smallmatrix}\right)_s^t$ is the product of
the four Euclidean Dirac matrices on $\M$ while $\gamma_F$ takes value $+1$ on
right particles and left antiparticles, and~$-1$ on left particles and right
antiparticles.

An element $a$ of the double algebra \eqref{eq:14} is a pair of elements of $\A$, namely
\begin{equation} \label{eq:52}
 a=(c, c', q, q', m, m')
\end{equation}
with
\begin{equation*}
 c, c'\in C^\infty(\M, \C),\qquad q, q'\in
 C^\infty(\M, \HHH),\qquad m, m'\in C^\infty(\M, M_3(\C)).
\end{equation*}
Following the twist by grading procedure of Section~\ref{sec:twist-grading},
its action on $\hil$ is given by the $128\times 128$ matrix
 \begin{equation}
a=\left(\begin{matrix}
Q&\\&M
\end{matrix}\right)_C^D \label{eq:5}
\end{equation}
in which the element $(c,q,m)\in\A$ acts on the $+1$ eigenspace of
$\HH$ (that is $s=r$ with $C=0$, $\alpha=\dot a$ or $C=1$, $\alpha= a$, and
$s=l$ with $C=0$, $\alpha= a$ or $C=1$, $\alpha= \dot a$)
while the element $(c',q',m')\in\A$ acts on the $-1$ eigenspace of~$\HH$ (that is $s=r$ with $C=0$, $\alpha= a$ or $C=1$, $\alpha= \dot a$ and $s=l$ with $C=0$, $\alpha= \dot a$ or $C=1$, $\alpha= a$).
 Explicitly,
\begin{equation*}
 Q=\left(\begin{matrix}
Q_r&\\ &Q_l
\end{matrix}\right)_s^t ,
\qquad M=\left(\begin{matrix}
M_r&\\ & M_l
\end{matrix}\right)_s^t,
\end{equation*}
are $64\times 64$ matrices whose blocks are the $32\times 32$ matrices
\begin{gather*}
Q_s=\delta_{\dot s}^{\dot t} \left(\begin{matrix}
\delta_I^J {\sf c}_s&\\&\delta_I^J q_{\overline{s}}
\end{matrix}\right)_\alpha^\beta,\qquad
M_s=\delta_{\dot s}^{\dot t} \left(\begin{matrix}
\delta_{\dot a}^{\dot b} {\sf m}_{\overline{s}}&\\&\delta_a^b {\sf m}_s
\end{matrix}\right)_\alpha^\beta,
\end{gather*}
where $s=r,l$, $\overline{s}$ denotes the opposite chirality of
$s$ and we define the $2\times 2$ and $4\times 4$ matrices
\begin{gather*}
{\sf c}_s=\left(\begin{matrix}
c_s&\\&\overline{c_s}
\end{matrix}\right)_a^b,\qquad
{\sf m}_s =\left(\begin{matrix}
c_s&\\&m_s
\end{matrix}\right)_I^J,
\end{gather*}
whose components are the elements of $a$:
\begin{gather}\label{eq:31}
c_r =c,\qquad c_l =c',\qquad q_r =q,\qquad q_l =q',\qquad m_r =m,\qquad m_l=m',
\end{gather}
identifying the quaternions $q$, $q'$ with their usual representation as
complex $2\times 2$ matrices.

Note that this representation is not the one used in~\cite{M.-Filaci:2020aa} (where the twisting operator is not the
grading), neither the one of~\cite{buckley} (in which
only the electroweak sector of the theory had been twisted).
 A extensive discussion on the
various ways of twisting an almost commutative geometry, and the
physical consequence for the standard model, is in preparation~\cite{Manuel-Filaci:2020aa}.

\begin{proposition} The real part of the twist-by-grading of the standard model is
 the $($non-twisted$)$ spectral triple $\big(C^\infty (\M,\mathbb{R} ), \HH, D\big)$.
\end{proposition}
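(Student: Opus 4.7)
The plan is to apply Proposition~\ref{prop:real-part2}: the standard model spectral triple has $KO$-dimension $2$, so the real part of its twist-by-grading coincides with the intersection $\A \cap \A^\circ$, where $\A = \cinf \otimes \Asm$ is regarded as a subalgebra of $\mathcal{B}(\HH)$ via the representation~\eqref{eq:5}. The proof reduces to an explicit computation of this intersection in the index conventions just introduced.

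I would first write out $J b^* J^{-1}$ for a generic $b = (c', q', m') \in \A$. Since $J$ swaps $C=0$ with $C=1$ and is antilinear, the action of $J b^* J^{-1}$ on the particle sector ($C=0$) is obtained by taking the action of $b^*$ on the antiparticle sector ($C=1$) and complex-conjugating the coefficients. According to~\eqref{eq:5}, on antiparticles $b^*$ acts diagonally, as a scalar built from $c'$ on the lepton block and as a matrix built from $m'$ on the lepto-colour index of the quark block, with no dependence on the quaternion $q'$. Consequently $J b^* J^{-1}$ acts on the particle sector as a scalar on every flavour index, involving only $c'$ and $m'$.

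Requiring this to coincide with the representation of $a = (c, q, m)$ on the particle sector then constrains each block separately. On the left-handed lepton block the quaternion $q$ acts non-diagonally in flavour, so matching a scalar action forces $q = \lambda\,\I_2$ for some $\lambda \in \mathbb{C}$; the quaternionic structure of $\HHH$ in its $2\times 2$ complex realisation further requires $\lambda \in \mathbb{R}$. Comparing with the right-handed block, whose scalar entries are $c$ and $\bar c$, then forces $c = \bar c \in C^\infty(\M,\mathbb{R})$ and $q = c\,\I_2$. On the quark block the $M_3(\mathbb{C})$-action must similarly collapse to a scalar matching the $c$ already fixed, forcing $m = c\,\I_3$. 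Conversely, any triple $(c, c\,\I_2, c\,\I_3)$ with $c$ real lies in both $\A$ and $\A^\circ$, whence $\A \cap \A^\circ \cong C^\infty(\M,\mathbb{R})$.

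Finally, such elements are self-adjoint, so by~\eqref{eq:8} they commute with $J$ and hence already belong to $\A_J$. The intersection therefore coincides with the untwisted real part; by the remark at the end of Section~\ref{subsec:real-part}, the flip $\rho$ acts as the identity on it, and the real part of the twist-by-grading is the genuine (non-twisted) spectral triple $\big(C^\infty(\M,\mathbb{R}), \HH, D\big)$. The main obstacle is the bookkeeping required to expand $J b^* J^{-1}$ block-by-block and to track the complex conjugations that ultimately pin down the real-valuedness of $c$; once the two matrices are written out in the conventions of~\eqref{eq:5}, the comparison is essentially mechanical.
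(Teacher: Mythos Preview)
Your route is sound but genuinely different from the paper's. The paper does \emph{not} invoke Proposition~\ref{prop:real-part2}; instead it first checks the compatibility~\eqref{eq:19} directly, then computes $(\A\otimes\C^2)_J$ by writing $JaJ^{-1}=\left(\begin{smallmatrix}\bar M&\\&\bar Q\end{smallmatrix}\right)$ for $a$ in the \emph{doubled} algebra and solving $Q=\bar M$ block by block. Only \emph{after} the proof does the paper verify, as a consistency check, that $\A\cap\A^\circ=\A_J$. You reverse this: you appeal to Proposition~\ref{prop:real-part2} to reduce to $\A\cap\A^\circ$ for the undoubled algebra, and then run essentially the same block comparison. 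Your path is cleaner in that it exploits a result already proved; the paper's is more self-contained and makes the compatibility verification explicit.

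Two points need tightening. First, Proposition~\ref{prop:real-part2} assumes compatibility of the twist with $J$ in the sense of~\eqref{eq:23}, and you skip this check. It is a one-line verification (the flip $r\leftrightarrow l$ commutes with the operation $Q\leftrightarrow M$ plus complex conjugation), but since it is a hypothesis of the proposition you rely on, it must appear. Second, your appeal to~\eqref{eq:8} is logically backwards: that identity is a \emph{consequence} of membership in $\A_J$, not a criterion for it, so it cannot be used to conclude that your elements commute with $J$. What you actually need for the flip to act trivially is $(a^*)^\circ=a$, equivalently $JaJ^{-1}=a$; for $a=(\lambda,\lambda\I_2,\lambda\I_3)$ with $\lambda$ real this follows immediately from the block formula for $JaJ^{-1}$ (here $Q=M=\lambda\I=\bar M$), so just say that instead.
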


\begin{proof}For any $a$ as in \eqref{eq:5}, the conjugation by the real
structure amounts to exchanging $Q$ with $M$ and taking the complex
conjugate (similar proof as in \cite[Proposition~3.1]{M.-Filaci:2020aa})
\begin{equation}\label{eq:34}
 Ja J^{-1}=
 \begin{pmatrix}
 \bar M & \\ &\bar Q
 \end{pmatrix}.
\end{equation}
The twist is the exchange of the primed quantities with the unprimed
ones. From~\eqref{eq:31}, this amounts to exchanging $Q_r$, $M_r$ with $Q_l$, $M_l$. This
operation extends as an inner automorphism of~$\mathcal B(\mathcal
H)$, and commutes with the conjugation by the real structure as
described above. In other
terms,
\begin{equation*}
J\rho(a)J^{-1} = \rho\big(JaJ^{-1}\big),
\end{equation*}
which shows that the twist of the standard model is compatible with the real structure in the
sense of~\eqref{eq:19}. Therefore, from
Proposition~\ref{prop:realpart}, the real part of the twist-by-grading of
the standard model is
\begin{equation}\label{eq:38}
 \big(\big(\A\otimes \C^2\big)_J, \HH, D\big)_\rho.
\end{equation}

By definition, $a$ in \eqref{eq:52} being in $\A_J$ is equivalent to $a=
JaJ^{-1}$. From
\eqref{eq:34} this is equivalent to $Q=\bar M$ that is, in
components,
\begin{equation*}
\left(\begin{matrix}
\delta_I^J \left(\begin{matrix}
c_s&\\&\overline{c_s}
\end{matrix}\right)_a^b
&\\&
\delta_I^J \left(q_{\overline{s}}\right)_a^b
\end{matrix}\right)_\alpha^\beta
 =\left(
 \begin{matrix}
 \delta_a^b
\left(\begin{matrix}
\overline{c_{\overline{s}}}& \\ &\left(\overline{m_{\overline{s}}}\right)_i^j
 \end{matrix}\right)_I^J
 &\\ &
 \delta_a^b \left(\begin{matrix}
 \overline{c_s}&\\&\left(\overline{m_s}\right)_i^j
 \end{matrix}\right)_I^J
 \end{matrix}\right)_\alpha^\beta.
\end{equation*}
In the first line, imposing the Kronecker $\delta$'s in the $a$, $b$ indices (on the right),
and on the $I$, $J$ indices (on the left) force to identify
\begin{equation*}
 c_s = \overline{c_s} := \lambda\in C^\infty(\M, \mathbb R) \qquad \text{and} \qquad
 m_{\overline{s}}=c_{\overline s} \I_3.
\end{equation*}
Then the equality between the left and the right hand terms yields
$\lambda = \overline{c_{\overline s}}$, hence
\begin{equation*}
 c_{\overline s}=\lambda, \qquad m_{\overline{s}}=\lambda \I_3.
\end{equation*}

The second line of the matricial equation then gives
\begin{equation*}
q_s = q_{\overline s}=\lambda \I_2.
\end{equation*}
Going back to \eqref{eq:31}, one thus obtains
\begin{equation}\label{eq:35}
 a=(\lambda, \lambda, \lambda \I_2,\lambda \I_2, \lambda \I_3, \lambda \I_3),
\end{equation}
hence
\begin{equation*}
 \big(\A\otimes\C^2\big)_J \simeq C^\infty(\M,\mathbb R).
\end{equation*}

Since the twist $\rho$ leaves~\eqref{eq:35} invariant, the spectral
triple~\eqref{eq:38} is actually non twisted.
\end{proof}

The real part of the twist-by-grading of the standard model is the same as the
non-twisted one (computed in \cite[Section~14.1]{Connes:2008kx}). This is in
agreement with Proposition~\ref{prop:real-part2}: one has that $a=b^\circ$ for some $b=
(R, N)$, if and only if $Q=\bar N$ and
$M=\bar R$. By the same analysis as above, this shows that
$a=b=\lambda\I$ that is, from~\eqref{eq:35}, $a\in\A_J$. Hence $\A\cap \A^\circ=\A_J$.

\pdfbookmark[1]{References}{ref}
\LastPageEnding


\begin{thebibliography}{99}
\footnotesize\itemsep=0pt

\bibitem{Boyle:2019ab}
Boyle L., Farnsworth S., The standard model, the {P}ati--{S}alam model, and
 `{J}ordan geometry', \href{https://doi.org/10.1088/1367-2630/ab9709}{\textit{New~J. Phys.}} \textbf{22} (2020), 073023,
 11~pages, \href{https://arxiv.org/abs/1910.11888}{arXiv:1910.11888}.

\bibitem{T.-Brzezinski:2016aa}
Brzezi\'nski T., Ciccoli N., D\c{a}browski L., Sitarz A., Twisted reality
 condition for {D}irac operators, \href{https://doi.org/10.1007/s11040-016-9219-8}{\textit{Math. Phys. Anal. Geom.}} \textbf{19}
 (2016), 16, 11~pages, \href{https://arxiv.org/abs/1601.07404}{arXiv:1601.07404}.

\bibitem{Brzezinski:2018aa}
Brzezi\'nski T., D\c{a}browski L., Sitarz A., On twisted reality conditions,
 \href{https://doi.org/10.1007/s11005-018-1120-x}{\textit{Lett. Math. Phys.}} \textbf{109} (2019), 643--659,
 \href{https://arxiv.org/abs/1804.07005}{arXiv:1804.07005}.

\bibitem{Chamseddine:2007fk}
Chamseddine A.H., Connes A., Conceptual explanation for the algebra in the
 noncommutative approach to the standard model, \href{https://doi.org/10.1103/PhysRevLett.99.191601}{\textit{Phys. Rev. Lett.}}
 \textbf{99} (2007), 191601, 4~pages, \href{https://arxiv.org/abs/0706.3690}{arXiv:0706.3690}.

\bibitem{Chamseddine:2008uq}
Chamseddine A.H., Connes A., Why the standard model, \href{https://doi.org/10.1016/j.geomphys.2007.09.011}{\textit{J.~Geom. Phys.}}
 \textbf{58} (2008), 38--47, \href{https://arxiv.org/abs/0706.3688}{arXiv:0706.3688}.

\bibitem{Chamseddine:2012fk}
Chamseddine A.H., Connes A., Resilience of the spectral standard model,
 \href{https://doi.org/10.1007/JHEP09(2012)104}{\textit{J.~High Energy Phys.}} \textbf{2012} (2012), no.~9, 104, 11~pages,
 \href{https://arxiv.org/abs/1208.1030}{arXiv:1208.1030}.

\bibitem{Chamseddine:2007oz}
Chamseddine A.H., Connes A., Marcolli M., Gravity and the standard model with
 neutrino mixing, \href{https://doi.org/10.4310/ATMP.2007.v11.n6.a3}{\textit{Adv. Theor. Math. Phys.}} \textbf{11} (2007),
 991--1089, \href{https://arxiv.org/abs/hep-th/0610241}{arXiv:hep-th/0610241}.

\bibitem{Chamseddine:2013uq}
Chamseddine A.H., Connes A., van Suijlekom W.D., Beyond the spectral standard
 model: emergence of {P}ati--{S}alam unification, \href{https://doi.org/10.1007/JHEP11(2013)132}{\textit{J.~High Energy
 Phys.}} \textbf{2013} (2013), no.~11, 132, 36~pages, \href{https://arxiv.org/abs/1304.8050}{arXiv:1304.8050}.

\bibitem{Chamseddine:2013fk}
Chamseddine A.H., Connes A., van Suijlekom W.D., Inner fluctuations in
 noncommutative geometry without the first order condition, \href{https://doi.org/10.1016/j.geomphys.2013.06.006}{\textit{J.~Geom.
 Phys.}} \textbf{73} (2013), 222--234, \href{https://arxiv.org/abs/1304.7583}{arXiv:1304.7583}.

\bibitem{Chamseddine:2019aa}
Chamseddine A.H., van Suijlekom W.D., A survey of spectral models of gravity
 coupled to matter, in Advances in Noncommutative Geometry, Editors
 A.~Chamseddine, C.~Consani, N.~Higson, M.~Khalkhali, H.~Moscovici, G.~Yu,
 \href{https://doi.org/10.1007/978-3-030-29597-4_1}{Springer}, Cham, 2019, 1--51, \href{https://arxiv.org/abs/1904.12392}{arXiv:1904.12392}.

\bibitem{Connes:1994kx}
Connes A., Noncommutative geometry, Academic Press, Inc., San Diego, CA, 1994.

\bibitem{Connes:2008kx}
Connes A., Marcolli M., Noncommutative geometry, quantum fields and motives,
 \textit{American Mathematical Society Colloquium Publications}, Vol.~55,
 \href{https://doi.org/10.1090/coll/055}{Amer. Math. Soc.}, Providence, RI, 2008.

\bibitem{Connes:1938fk}
Connes A., Moscovici H., Type {III} and spectral triples, in Traces in Number
 Theory, Geometry and Quantum Fields, \textit{Aspects Math.}, Vol.~E38,
 Friedr. Vieweg, Wiesbaden, 2008, 57--71, \href{https://arxiv.org/abs/math.OA/0609703}{arXiv:math.OA/0609703}.

\bibitem{Devastato:2018aa}
Devastato A., Farnsworth S., Lizzi F., Martinetti P., Lorentz signature and
 twisted spectral triples, \href{https://doi.org/10.1007/jhep03(2018)089}{\textit{J.~High Energy Phys.}} \textbf{2018} (2018),
 no.~3, 089, 21~pages, \href{https://arxiv.org/abs/1710.04965}{arXiv:1710.04965}.

\bibitem{Devastato:2013fk}
Devastato A., Lizzi F., Martinetti P., Grand symmetry, spectral action and the
 {H}iggs mass, \href{https://doi.org/10.1007/jhep01(2014)042}{\textit{J.~High Energy Phys.}} \textbf{2014} (2014), no.~1, 042,
 29~pages, \href{https://arxiv.org/abs/1304.0415}{arXiv:1304.0415}.

\bibitem{buckley}
Devastato A., Martinetti P., Twisted spectral triple for the standard model and
 spontaneous breaking of the grand symmetry, \href{https://doi.org/10.1007/s11040-016-9228-7}{\textit{Math. Phys. Anal. Geom.}}
 \textbf{20} (2017), 2, 43~pages, \href{https://arxiv.org/abs/1411.1320}{arXiv:1411.1320}.

\bibitem{Manuel-Filaci:2020aa}
Filaci M., Martinetti P., Minimal twist of almost commutative geometries, {i}n
 preparation.

\bibitem{M.-Filaci:2020aa}
Filaci M., Martinetti P., Pesco S., Twisted standard model in noncommutative
 geometry~{I}: the field content, \href{https://arxiv.org/abs/2008.01629}{arXiv:2008.01629}.

\bibitem{Lett.}
Landi G., Martinetti P., On twisting real spectral triples by algebra
 automorphisms, \href{https://doi.org/10.1007/s11005-016-0880-4}{\textit{Lett. Math. Phys.}} \textbf{106} (2016), 1499--1530,
 \href{https://arxiv.org/abs/1601.00219}{arXiv:1601.00219}.

\bibitem{Landi:2017aa}
Landi G., Martinetti P., Gauge transformations for twisted spectral triples,
 \href{https://doi.org/10.1007/s11005-018-1099-3}{\textit{Lett. Math. Phys.}} \textbf{108} (2018), 2589--2626,
 \href{https://arxiv.org/abs/1704.06212}{arXiv:1704.06212}.

\end{thebibliography}
\end{document}